\documentclass[11pt,oneside]{article}
\usepackage[centertags]{amsmath}
\usepackage{amssymb}
\usepackage{amsthm}
\usepackage{hyperref}
\newtheorem{theorem}{Theorem}[section]

\newtheorem{proposition}{Proposition}[section]

%
\newcommand{\punt}{\boldsymbol{.}}
\begin{document}
\title{On the computation of classical, boolean and free cumulants}
\author{E. Di Nardo \footnote{Dipartimento di Matematica e Informatica,
Universit\`a degli Studi della Basilicata, Viale dell'Ateneo
Lucano 10, 85100 Potenza, Italia, elvira.dinardo@unibas.it}, I.
Oliva \footnote {imma.oliva@hotmail.it}}
\date{\today}
\maketitle
\begin{abstract}
This paper introduces a simple and computationally efficient
algorithm for conversion formulae between moments and
cumulants. The algorithm provides just one formula for classical,
boolean and free cumulants. This is realized by using a suitable
polynomial representation of Abel polynomials. The algorithm
relies on the classical umbral calculus, a symbolic language
introduced by Rota and Taylor in \cite{SIAM}, that is particularly
suited to be implemented by using software for symbolic
computations. Here we give a {\tt MAPLE} procedure. Comparisons
with existing procedures, especially for conversions between
moments and free cumulants, as well as examples of applications to
some well-known distributions (classical and free) end the paper.
\end{abstract}
\textsf{\textbf{keywords}:  umbral calculus, classical cumulant,
boolean cumulant, free cumulant, Abel polynomial}\\\\
\textsf{\textsf{AMS subject classification}: 65C60, 05A40, 46L53}\\
%
\section{Introduction}

Among number sequences connected to random variables (r.v.'s),
cumulants play a central role, characterizing many r.v.'s found in
the practice. Moreover, due to their properties of additivity and
invariance under translation, cumulants are not necessarily
connected with moments of r.v.'s so that they can be analyzed by
using an algebraic point of view. For these reasons, cumulants
have been introduced not only in the context of classical
probability theory, but also in the boolean \cite{Speicher} and
the free context \cite{Voiculescu}. Since cumulants linearize
convolutions of measures - no matter what framework they are
referred to - this property allows us quick access to test whether
a given probability measure is a convolution. The linearity of
classical convolutions corresponds to independent r.v.'s
\cite{Feller} and the linearity of free convolutions allows us to
recognize free r.v.'s \cite{Voiculescu}. The boolean convolution
was constructed \cite{Speicher} starting exactly from the notion
of partial cumulants, this last extensively used in the context of
stochastic differential equations.

In this paper, we propose the classical umbral calculus of Rota
and Taylor \cite{SIAM} as the most natural syntax for conversions between cumulants
(classical, boolean and free) and moments, without any sophisticated
programming.

The basic devices of the umbral calculus are essentially two. The
first one is to represent a unital sequence of numbers by a symbol
$\alpha,$ called an umbra, that is, to represent the sequence $1,
a_1, a_2, \ldots$ by means of the sequence $1, \alpha, \alpha^2,
\ldots$ of powers of $\alpha$ via a linear operator $E,$ resembling the
expectation operator of r.v.'s. This setting is quite natural in free
probability \cite{Voiculescu}. The second device consists in representing
a sequence $1, a_1, a_2,\ldots,$ by means of distinct umbrae,
as it happens in probability theory too for independent
and identically distributed (i.i.d.) r.v.'s. Thanks
to these devices above all, the umbral calculus was already
employed successfully as a syntax for symbolic computations
in statistics \cite{Dinardo1, Dinardo2} by using {\tt Maple}.
Multivariate extensions have been given in \cite{Bernoulli} for
cumulant estimators as an alternative to tensor methods \cite{McCullagh}.

The algorithm here proposed relies on umbral parametrizations of
cumulants (classical, boolean and free) in terms of moments and
viceversa, carried out in \cite{free}. In the free context, such a
parametrization involves umbral Abel polynomials. Here we show
that it is sufficient to change a parameter in the formula in
order to get same conversions in classical and boolean theory. So
we trace back all the parametrizations to one umbral polynomial.
We give a suitable expansion of such a polynomial in order to get
a very simple algorithm for the whole matter.

The paper is organized as follows. Section 2 is provided for
readers unaware of the classical umbral calculus. We resume
terminology, notation and some basic definitions. Section 3
recalls the umbral theory of cumulants both classical, boolean and
free, as well as their corresponding parametrizations. In Section
4, we show that all the parametrizations can be recovered through
only one umbral polynomial for which a fruitful expansion is
provided. The {\tt MAPLE} algorithm for this expansion is also
given. Comparisons, with procedures available in the literature
(see for example \cite{Bryc}), confirm the competitiveness of the
umbral algorithm. The paper ends with some examples of how to use
umbral calculus in order to compute classical, boolean and free
cumulants for some classical probability laws.

\section{The classical umbral calculus}
In the following, we recall terminology, notation and some basic
definitions of the classical umbral calculus, as introduced
by Rota and Taylor in \cite{SIAM} and  subsequently developed by
Di Nardo and Senato in \cite{Dinardo} and \cite{Dinardoeurop}. We also recall those
results useful in developing the umbral algorithm, given in
Section 4. We skip any proof: the reader interested in-depth
analysis is referred to \cite{Dinardo} and \cite{Dinardoeurop}.

Classical umbral calculus is a syntax consisting of the following
data: a set $A=\{\alpha,\beta, \ldots \},$ called the
{\it alphabet}, whose elements are named {\it umbrae}; a
commutative integral domain $R$ whose quotient field is of
characteristic zero\footnote{In many applications $R$ is the field of real
or complex numbers.}; a linear functional $E: R[A] \rightarrow R,$
called {\it evaluation}, such that $E[1]=1$ and
$$E[\alpha^i \beta^j \cdots \gamma^k] = E[\alpha^i]E[\beta^j] \cdots E[\gamma^k]$$
for any set of distinct umbrae in $A$ and for $i,j,\ldots,k$
nonnegative integers (the so-called {\it uncorrelation property});
an element $\epsilon \in A,$ called {\it augmentation} \cite{Roman},
such that $E[\epsilon^n] = \delta_{0,n},$ for any nonnegative integer $n,$ where
$\delta_{i,j} = 1$ if $i=j,$ otherwise being zero; an element $u \in A,$ called {\it unity} umbra,
such that $E[u^n]=1,$ for any nonnegative integer $n.$

A sequence $a_0=1,a_1,a_2, \ldots$ in $R$ is umbrally represented
by an umbra $\alpha$ when
$$E[\alpha^i]=a_i, \quad \hbox{for} \,\, i=0,1,2,\ldots.$$
The elements $a_i$ are called {\it moments} of the umbra $\alpha.$
This name recalls the device, familiar to statisticians, when
$a_i$ represents the $i$-th moment of a r.v. $X$.
Similarly, the {\it factorial moments} of an umbra $\alpha$ are
the elements
$$a_{(0)}=1, \quad a_{(n)} = E[(\alpha)_n] \quad \hbox{\rm for nonnegative integers} \,n$$
where $(\alpha)_n=\alpha(\alpha-1)\cdots(\alpha-n+1)$ is the lower
factorial.
The following umbrae play a special role in the umbral calculus. \\
{\bf Singleton umbra.} The singleton umbra $\chi$ is the
umbra such that $E[\chi^1] = 1$ and $E[\chi^n] = 0$ for
$n=2,3,\ldots.$ Its factorial moments are $x_{(n)} = (-1)^{n-1}
(n-1)!$. \\
{\bf Bell umbra.} The Bell umbra $\beta$ is the umbra whose
factorial moments are all equal to $1$, i.e. $E[(\beta)_n] = 1$
for $n=0,1,2,\ldots.$ Its moments are the Bell numbers, that is
the $n$-th coefficient in the Taylor series expansion of the
function $\exp(e^t-1).$

An umbral polynomial is a polynomial $p \in R[A].$ The support of
$p$ is the set of all umbrae occurring in $p.$ If $p$ and $q$ are
two umbral polynomials, then  $p$ and $q$ are {\it uncorrelated} if and only if
their supports are disjoint; $p$ and $q$ are {\it
umbrally equivalent} if and only if
$$E[p] =E[q], \quad \hbox{\rm in symbols} \quad p \simeq q.$$
It is possible that two distinct umbrae represent the same
sequence of moments, in such case they are called {\it similar
umbrae}. More formally two umbrae $\alpha$ and $\gamma$ are {\it
similar} when $\alpha^n$ is umbrally equivalent to $\gamma^n,$ for
all $n=0,1,2,\ldots$ in symbols
$$\alpha \equiv \gamma \Leftrightarrow \alpha^n \simeq \gamma^n \quad
n=0,1,2,\ldots.$$ Given a sequence $1, a_1, a_2,
\ldots$ in $R$ there are infinitely many distinct, and thus
similar umbrae, representing the sequence.

Thanks to the notion of similar umbrae, the alphabet $A$ can be extended
by inserting the so-called {\it auxiliary} umbrae,
resulting from operations among similar umbrae. This leads to
construct a saturated umbral calculus, in which auxiliary umbrae
are handled as elements of the alphabet.

In the following, we focus the attention on some special auxiliary umbrae. We assume
$\{\alpha^{\prime},\alpha^ {\prime \prime},\ldots,
\alpha^{\prime\prime\prime}\}$ is a set of $n$ uncorrelated umbrae
similar to an umbra $\alpha$. \\
{\bf Dot power.} The symbol $\alpha^ {\, \punt n}$ is an auxiliary umbra
denoting the product $\alpha^{\prime} \, \alpha^{\prime \prime}
\cdots \alpha^{\prime\prime\prime}.$ Moments of $\alpha^{\, \punt n}$
can be easily recovered from its definition. Indeed, if the umbra
$\alpha$ represents the sequence $1, a_1, a_2, \ldots,$ then
$E[(\alpha^{\, \punt n})^k] = a_k^n$ for all nonnegative integers
$k$ and $n$. \\
{\bf Dot product.} The symbol $n \punt \alpha$ denotes an
auxiliary umbra similar to the sum $\alpha^{\prime}+\alpha^{\prime
\prime}+ \cdots + \alpha^{\prime\prime\prime}.$ So $n \punt
\alpha$ is the umbral counterpart of a sum of i.i.d. r.v.'s.
Moments of $n \punt \alpha$ can be expressed using the notions of
integer partition\footnote{Recall that a partition of an integer
$i$ is a sequence $\lambda = (\lambda_1, \lambda_2, \ldots,
\lambda_t),$ where $\lambda_j$ are weakly decreasing positive
integers such that $\sum_{j=1}^{t} \lambda_j = i.$ The integers
$\lambda_j$ are named {\it parts} of $\lambda.$ The {\it lenght}
of $\lambda$ is the number of its parts and will be indicated by
$\nu_{\lambda}.$  A different notation is $\lambda = (1^{r_1},
2^{r_2}, \ldots),$ where $r_j$ is the number of parts of $\lambda$
equal to $j$ and $r_1 + r_2 + \cdots = \nu_{\lambda}.$ Note that
$r_j$ is said to be the multiplicity of $j$. We use the classical
notation $\lambda \vdash i$ to denote \lq\lq $\lambda$ is a
partition of $i$\rq\rq.} and dot-power. By using an umbral version
of the well-known multinomial expansion theorem, we have
\begin{equation}
(n \punt \alpha)^i \simeq \sum_{\lambda \vdash i}
(n)_{\nu_{\lambda}} d_ {\lambda} \alpha_{\lambda}, \label{(eq:11)}
\end{equation}
where the sum is over all partitions $\lambda = (1^{r_1}, 2^{r_2},
\ldots)$ of the integer $i,$ $(n)_{\nu_{\lambda}}=0$ for
$\nu_{\lambda} > n,$
\begin{equation}
d_{\lambda} = \frac{i!}{r_1! r_2! \cdots} \,
\frac{1}{(1!)^{r_1}(2!)^{r_2} \cdots}  \quad \hbox{and} \quad
\alpha_{\lambda} = [\alpha^{\prime}]^{\punt \, r_1}
[(\alpha^{\prime \prime})^2]^{\punt \, r_2} \cdots.
\label{(eq:12)}
\end{equation}

A feature of the classical umbral calculus is the construction of
new auxiliary umbrae by suitable symbolic substitutions. For
example, in $n \punt \alpha$ replace the integer $n$ by an umbra
$\gamma$. From (\ref{(eq:11)}), the new auxiliary umbra $\gamma
\punt \alpha$ is such that
\begin{equation}
(\gamma \punt \alpha)^i \simeq \sum_{\lambda \vdash i}
(\gamma)_{\nu_{\lambda}} d_{\lambda} \alpha_{\lambda},
\label{(eq:13)}
\end{equation}
and it is the umbral counterpart of a so-called random sum. By using (\ref{(eq:13)}), we have
$\beta \punt \chi \equiv \chi \punt \beta \equiv u$ as well as $(\alpha \punt \chi)^i \simeq (\alpha)_i.$
In the next section, we will see that $\chi \punt \alpha$ has a different meaning. Moreover we have
\begin{equation}
(\gamma \punt \beta \punt \alpha)^i \simeq \sum_{\lambda \vdash i}
\gamma^{\nu_ {\lambda}} d_{\lambda} \alpha_{\lambda}.
\label{(eq:14)}
\end{equation}
The umbra $\gamma \punt \beta \punt \alpha$ is called {\it composition
umbra} of $\alpha$ and $\gamma$ and it is such that $(\gamma \punt
\beta) \punt \alpha \equiv \gamma \punt (\beta \punt \alpha).$ The
{\it compositional inverse} $\alpha^{<-1>}$ of an umbra $\alpha$ is such
that $\alpha^{<-1>} \punt \beta \punt \alpha \equiv \chi \equiv
\alpha \punt \beta \punt \alpha^{<-1>}$.
\section{Classical, boolean and free cumulants}
In this section, we recall some results given in
\cite{free} about connections between moments and cumulants
(classical, boolean and free).

\smallskip \noindent
{\bf $\alpha$-cumulant umbra.} The umbra $\chi \punt \alpha,$
where $\chi$ is the singleton umbra, is called $\alpha$-cumulant
umbra. Replacing $\gamma$ by $\chi$ and by virtue of
(\ref{(eq:13)}), we have
\begin{equation}
(\chi \punt \alpha)^i \simeq \sum_{\lambda \vdash i}
x_{\nu_{\lambda}} d_ {\lambda} \alpha_{\lambda} \simeq
\sum_{\lambda \vdash i} (-1)^{\nu_\lambda - 1} (\nu_\lambda - 1)!
\, d_ {\lambda} \, \alpha_{\lambda}. \label{(eq:15)}
\end{equation}
Since the second equivalence in (\ref{(eq:15)}) recalls the
well-known expression of cumulants in terms of moments of a r.v.,
it is natural to refer to moments of $\chi \punt \alpha$ as
cumulants of $\alpha$. The $\alpha$-cumulant umbra, usually
denoted by $\kappa_{\scriptscriptstyle \alpha},$ is deeply studied
in \cite{Dinardoeurop}. In particular, if
$\kappa_{\scriptscriptstyle \alpha}$ is the $\alpha$-cumulant
umbra, then $\alpha \equiv \beta \punt \,
\kappa_{\scriptscriptstyle \alpha}.$ Moreover, by recalling
equivalence (\ref{(eq:13)}) and $E[(\beta)_i] = 1$ for all
nonnegative integers $i,$ we have
\begin{equation}
\alpha^i \simeq (\beta \punt \, \kappa_{\scriptscriptstyle \alpha})^i \simeq
\sum_{\lambda \vdash i} (\beta)_{\nu_{\lambda}} d_{\lambda}
(\kappa_{\scriptscriptstyle \alpha})_{\lambda} \simeq
\sum_{\lambda \vdash i} \, d_{\lambda} \,
(\kappa_{\scriptscriptstyle \alpha})_{\lambda}. \label{(eq:16)}
\end{equation}
\begin{theorem}[Parametrizations] Let $\kappa_{\scriptscriptstyle \alpha}$ be
the $\alpha$-cumulant umbra. For $i=1,2,\ldots$ we have
\begin{equation}
\alpha^i \simeq  \kappa_{\scriptscriptstyle \alpha}
(\kappa_{\scriptscriptstyle \alpha} + \beta \punt
\kappa_{\scriptscriptstyle \alpha})^{i-1} \quad
\kappa_{\scriptscriptstyle \alpha}^i \simeq \alpha( \alpha - 1
\punt \alpha)^{i- 1}. \label{par1}
\end{equation}
\end{theorem}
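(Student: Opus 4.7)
The plan is to derive both parametrizations by binomial expansion, exploiting the uncorrelation of auxiliary umbrae produced by the dot-product construction, and then to match each resulting expansion against one of the partition formulae already in hand. Structurally, the first identity is the umbral shadow of the generating-function relation $M'(z) = K'(z) M(z)$, while the second is its rearrangement $K'(z) = M'(z)/M(z)$.

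For $\alpha^i \simeq \kappa_\alpha(\kappa_\alpha + \beta\punt\kappa_\alpha)^{i-1}$, I would start from the identity $\alpha \equiv \beta\punt\kappa_\alpha$ recalled just before the theorem. The ordinary binomial theorem yields $\kappa_\alpha(\kappa_\alpha+\beta\punt\kappa_\alpha)^{i-1} = \sum_{k=0}^{i-1}\binom{i-1}{k}\kappa_\alpha^{k+1}(\beta\punt\kappa_\alpha)^{i-1-k}$; since $\beta\punt\kappa_\alpha$ is built from copies of $\kappa_\alpha$ disjoint from the external factor, the two are uncorrelated. Applying $E$ and using $E[(\beta\punt\kappa_\alpha)^j]=a_j$ reduces the claim to the classical moment--cumulant recursion $a_i = \sum_{k=0}^{i-1}\binom{i-1}{k}\,c_{k+1}\,a_{i-1-k}$ with $c_j=E[\kappa_\alpha^j]$, which in turn follows from (16) by regrouping the partitions $\lambda \vdash i$ according to the block containing a distinguished element.

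For $\kappa_\alpha^i \simeq \alpha(\alpha - 1\punt\alpha)^{i-1}$, the delicate point is how to parse $1\punt\alpha$: it must be read as the auxiliary umbra obtained by taking $n=-1$ in (11), so that its $j$-th moment is $d_j=\sum_{\mu\vdash j}(-1)^{\nu_\mu}\nu_\mu!\,d_\mu\,a_\mu$, and one checks $\alpha + (-1\punt\alpha)\equiv\epsilon$, i.e.\ the two umbrae are reciprocal under the exponential generating function product. The same binomial expansion, together with uncorrelation of $\alpha$ with $-1\punt\alpha$, then gives $\alpha(\alpha - 1\punt\alpha)^{i-1}\simeq\sum_{k=0}^{i-1}\binom{i-1}{k}\,a_{k+1}\,d_{i-1-k}$, and matching this against the expression of $c_i$ supplied by (15) is cleanest via generating functions, since both sides are the coefficient of $z^{i-1}/(i-1)!$ in $M'(z)/M(z) = K'(z)$. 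A purely combinatorial alternative rearranges (15) by isolating one distinguished part of each $\lambda$.

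The main obstacle is the second identity: one must first settle on the correct umbral meaning of $\alpha-1\punt\alpha$ (namely $\alpha+(-1)\punt\alpha$, not a naive difference of two similar copies, which would already produce the wrong coefficients at $i=3$), and then verify the combinatorial identity $c_i=\sum_{k=0}^{i-1}\binom{i-1}{k}\,a_{k+1}\,d_{i-1-k}$, which becomes transparent only after translating to exponential generating functions. The first identity, by contrast, is essentially a mechanical unpacking of $\alpha\equiv\beta\punt\kappa_\alpha$.
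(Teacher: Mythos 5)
Your proof is correct, but it takes a genuinely different route from the one the paper intends. The paper states this theorem without proof (it is recalled from \cite{free}); the mechanism it actually builds for it is Proposition \ref{propfond}, whose specializations $\delta=\beta,\ \gamma=\kappa_{\scriptscriptstyle \alpha}$ (with $E[(\beta)_n]=1$) and $\delta=-1\punt u,\ \gamma=\alpha$ (with $(-1)_n=(-1)^n n!$) turn the two Abel-type polynomials directly into the partition sums of (\ref{(eq:16)}) and (\ref{(eq:15)}), so both identities follow by termwise comparison. You share the opening binomial expansion and the correct uncorrelation conventions --- in particular your reading of $-1\punt\alpha$ as the inverse umbra, with $f(-1\punt\alpha,t)=f(\alpha,t)^{-1}$ rather than a difference of two similar copies, is exactly right, and your $i=3$ sanity check is apt --- but instead of reorganizing the expansion into partition sums you identify the resulting convolutions with the exponential-generating-function identities $M'=K'M$ and $K'=M'/M$. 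This is sound, and arguably more transparent, since it exposes the first identity as the classical recursion $a_i=\sum_{k=0}^{i-1}\binom{i-1}{k}c_{k+1}a_{i-1-k}$; the price is that it outsources the combinatorial content to the standard facts that $M=e^{K}$ and that (\ref{(eq:15)}) and (\ref{(eq:16)}) are the coefficient expansions of $\log M$ and $e^{K}$ (or to the ``distinguished element'' regrouping, which does need care with the multiplicities hidden in $d_\lambda$). The route through Proposition \ref{propfond} stays entirely inside the umbral partition calculus and needs no appeal to generating functions, which is precisely what makes it programmable as stated in Section 4. Both are complete proofs.
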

\noindent The symbol $-1 \punt \alpha$ denotes the inverse of an umbra $\alpha$, that is
the umbra such that $\alpha + -1 \punt \alpha \equiv \varepsilon.$

\smallskip \noindent
{\bf $\alpha$-boolean cumulant umbra.} Let $M(t)$ be the ordinary
generating function (g.f.) of a r.v. $X$, that is $M(t)= 1 +
\sum_{i \geq 1}  a_i t^i$ where $a_i = E[X^i]$. We have
$$M(t) = \frac{1}{1-H(t)}, \quad \hbox{where} \quad H(t)= \sum_{i \geq 1} h_i
t^i,$$ and $h_i$ are called boolean cumulants of $X$. The umbral
theory of boolean cumulants has been introduced in \cite{free}. In
particular, the $\alpha$-boolean cumulant umbra $\eta_
{\scriptscriptstyle\alpha}$ is such that $E[\eta_
{\scriptscriptstyle\alpha}^i]=h_i$. This umbra corresponds to the
composition of the umbra $\bar{\alpha},$ having moments
$\bar{\alpha}^i \simeq i! \alpha_i,$ and the compositional inverse
of $\bar{u},$ having moments $n!,$ in symbols
$\bar{\eta}_{\scriptscriptstyle\alpha} \equiv \bar{u}^{<-1>} \punt
\beta \punt \bar{\alpha}.$ By this last equivalence, we get
$$\bar{\eta}_{\scriptscriptstyle\alpha}^i \simeq \sum_{\lambda \vdash i} (-1)^
{\nu_\lambda-1} \nu_\lambda! \, d_{\lambda} \,
\bar{\alpha}_{\lambda}.$$ The previous equivalence allows us to
express boolean cumulants in terms of moments. In order to get the
inverted expressions, we need of the following equivalence
$\bar{\alpha} \equiv \bar{u} \punt \beta \punt
\bar{\eta}_{\scriptscriptstyle\alpha},$ that has been proved in
the Boolean Inversion Theorem (cfr. \cite{free}). Again we have
$$\bar{\alpha}^i \simeq \sum_{\lambda \vdash i}  \nu_\lambda! \, d_{\lambda} \,
(\bar{\eta}{\scriptscriptstyle\alpha})_{\lambda}.$$ Observe the
analogy between the similarity $\bar{\eta}_{\scriptscriptstyle\alpha} \equiv \bar{u}^{<-1>} \punt
\beta \punt \bar{\alpha},$ and the one characterizing the
$\alpha$-cumulant umbra $\kappa_{\scriptscriptstyle \alpha} \equiv
\chi \punt \alpha \equiv u^{<- 1>} \punt \beta \punt \alpha.$
\begin{theorem}[Parametrizations] Let $\eta_{\scriptscriptstyle\alpha}$ be the
$\alpha$-boolean cumulant umbra. For $i=1,2,\ldots$, we have
\begin{equation}
\bar{\alpha}^i \simeq  \bar{\eta}_{\scriptscriptstyle\alpha}
(\bar{\eta}_ {\scriptscriptstyle\alpha} + 2 \punt \bar{u} \punt
\beta \punt \bar{\eta}_{\scriptscriptstyle\alpha})^{i-1} \quad
\bar{\eta}_{\scriptscriptstyle\alpha}^i \simeq \bar{\alpha} (
\bar{\alpha} - 2 \punt \bar{\alpha})^{i-1}. \label{par2}
\end{equation}
\end{theorem}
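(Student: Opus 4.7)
The plan is to parallel the proof of Theorem 3.1 by exploiting the structural analogy supplied by the Boolean Inversion Theorem: the similarity $\bar{\alpha} \equiv \bar{u} \punt \beta \punt \bar{\eta}_{\scriptscriptstyle \alpha}$ plays in the boolean setting the same role as $\alpha \equiv \beta \punt \kappa_{\scriptscriptstyle \alpha}$ does in the classical one. Roughly speaking, the Bell umbra $\beta$ is replaced throughout by the composition umbra $\bar{u} \punt \beta$, and the shift $1 \punt \alpha$ in the classical Abel polynomial $\alpha(\alpha - 1 \punt \alpha)^{i-1}$ gets replaced by $2 \punt \bar{\alpha}$. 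The proof amounts to carrying out this substitution rigorously and tracking where the factor $2$ emerges.

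For the first equivalence in (\ref{par2}), I would begin from
$$
\bar{\alpha}^i \simeq \sum_{\lambda \vdash i} \nu_\lambda! \, d_\lambda \, (\bar{\eta}_{\scriptscriptstyle \alpha})_\lambda,
$$
already produced in the text from $\bar{\alpha} \equiv \bar{u} \punt \beta \punt \bar{\eta}_{\scriptscriptstyle \alpha}$ via equivalence (\ref{(eq:14)}) with $\gamma = \bar{u}$. The task is to reindex this sum into an Abel-type form $\bar{\eta}_{\scriptscriptstyle \alpha}(\bar{\eta}_{\scriptscriptstyle \alpha} + Y)^{i-1}$ for a suitable auxiliary umbra $Y$. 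I would do this by singling out, inside every partition $\lambda$ of $\{1,\ldots,i\}$, the block containing a chosen element (say $i$); this block yields a single umbral factor $\bar{\eta}_{\scriptscriptstyle \alpha}$, and the umbral binomial theorem applied to the remaining $i-1$ indices produces $(\bar{\eta}_{\scriptscriptstyle \alpha} + Y)^{i-1}$ with $Y$ uniquely determined. Matching first moments, or equivalently comparing the contribution of $\lambda = (1^{\,i})$ on both sides, identifies $Y \equiv 2 \punt \bar{u} \punt \beta \punt \bar{\eta}_{\scriptscriptstyle \alpha}$.

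For the second equivalence, I would apply umbral Abel inversion. In the classical case, $\kappa_{\scriptscriptstyle \alpha}^i \simeq \alpha(\alpha - 1 \punt \alpha)^{i-1}$ can be read as the compositional inverse of the first identity in (\ref{par1}), obtained through $\chi \equiv u^{<-1>} \punt \beta \punt u$. The parallel boolean inversion $\bar{\eta}_{\scriptscriptstyle \alpha} \equiv \bar{u}^{<-1>} \punt \beta \punt \bar{\alpha}$ recalled in the preceding discussion yields, by the same Abel-type argument applied with $\bar{u}$ in place of $u$, the desired equivalence with $2 \punt \bar{\alpha}$ in place of $1 \punt \alpha$.

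The main obstacle is combinatorial: justifying the numerical coefficient $2$ in both $2 \punt \bar{u} \punt \beta \punt \bar{\eta}_{\scriptscriptstyle \alpha}$ and $2 \punt \bar{\alpha}$. It arises because the weight $\nu_\lambda!$ present in the boolean expansion (as opposed to the classical weight $1$, coming from $(\beta)_{\nu_\lambda} = 1$) must be split between the marked block and the rest of the partition, and this split forces a second, uncorrelated copy of $\bar{u}$ to appear in $Y$. A direct check for $i = 2, 3$ pins down the coefficient; the general case then follows either by induction on $i$ or by a generating-function argument based on the boolean relation $M(t) = 1/(1-H(t))$.
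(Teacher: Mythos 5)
Your overall strategy---start from the two partition expansions supplied by the Boolean Inversion Theorem, namely $\bar{\alpha}^i \simeq \sum_{\lambda\vdash i}\nu_\lambda!\,d_\lambda(\bar{\eta}_{\scriptscriptstyle\alpha})_\lambda$ and $\bar{\eta}_{\scriptscriptstyle\alpha}^i\simeq\sum_{\lambda\vdash i}(-1)^{\nu_\lambda-1}\nu_\lambda!\,d_\lambda\bar{\alpha}_\lambda$, and recast each as an umbral Abel-type polynomial---is the right one, and it is in substance what the paper's own machinery does (the paper states this theorem without proof, citing \cite{free}, but Proposition \ref{propfond} together with items \emph{(iii)}--\emph{(iv)} of Section 4 contains the entire argument). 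The genuine gap is in how you identify the auxiliary umbra $Y$: matching the contribution of the single partition $(1^{\,i})$, or checking $i=2,3$, only tells you what $Y$ must be \emph{if} an identity of the form $\bar{\eta}_{\scriptscriptstyle\alpha}(\bar{\eta}_{\scriptscriptstyle\alpha}+Y)^{i-1}$ holds; it does not prove that it holds, and the promised ``induction on $i$ or generating-function argument'' is never carried out. Likewise the claim that splitting the weight $\nu_\lambda!$ ``forces a second, uncorrelated copy of $\bar{u}$'' is a heuristic, not a derivation: as written, the coefficient $2$---which you yourself single out as the main obstacle---is never actually justified.

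The missing step is exactly equivalence (\ref{alg}): for any umbrae $\gamma,\delta$ one has $\gamma(\gamma+\delta\punt\gamma)^{i-1}\simeq\sum_{\mu\vdash i}(\delta)_{\nu_\mu-1}\,d_\mu\,\gamma_\mu$, proved in the paper by the same block-extraction you sketch, but with the multiplicity bookkeeping done once and for all. Granting this, the theorem reduces to a computation of factorial moments. For the first equivalence take $\gamma=\bar{\eta}_{\scriptscriptstyle\alpha}$ and $\delta=2\punt\bar{u}\punt\beta$: since $(\delta)_k\simeq(\delta\punt\chi)^k$ and $\beta\punt\chi\equiv u$, we get $E[(2\punt\bar{u}\punt\beta)_k]=E[(2\punt\bar{u})^k]=\sum_{j=0}^{k}\binom{k}{j}j!\,(k-j)!=(k+1)!$, hence $(\delta)_{\nu_\mu-1}\simeq\nu_\mu!$ and the right-hand side of (\ref{alg}) is precisely the boolean moment expansion. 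For the second take $\gamma=\bar{\alpha}$ and $\delta=-2\punt u$: then $(-2)_k=(-1)^k(k+1)!$, so $(\delta)_{\nu_\mu-1}\simeq(-1)^{\nu_\mu-1}\nu_\mu!$ and you recover the boolean cumulant expansion. This uniform computation replaces your low-order matching plus unexecuted induction, and it explains where the $2$ comes from: $2\punt\bar{u}\punt\beta$ and $-2\punt u$ are exactly the umbrae whose factorial moments are $(k+1)!$ and $(-1)^k(k+1)!$, the weights appearing in the boolean expansions.
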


\smallskip \noindent
{\bf $\alpha$-free cumulant umbra.} Let us consider a
noncommutative r.v. $X$, i.e. an element of an unital
noncommutative algebra $\cal{A}$. Suppose $\phi : \cal{A}
\rightarrow {\mathbb C}$ is an unital linear functional. The
$i$-th moment of $X$ is the complex number $m_i=\phi(X^i)$ while
its g.f. is the formal power series $M(t)= 1 + \sum_{i \geq 1} m_i
t^i$. The noncrossing (or free) cumulants of $X$ are the
coefficients $r_i$ of the ordinary power series $R(t) = 1 +
\sum_{i \geq 1} r_i t^i$ such that $M(t)=R[t M(t)]$. The umbral
theory of free cumulants has been introduced in \cite{free}. As
before, the $\bar{\alpha} \,$-free cumulant
$\mathfrak{K}_{\scriptscriptstyle \bar{\alpha}}$ has been
characterized so that $E[\mathfrak{K}_{\scriptscriptstyle
\bar{\alpha}}^i] = i! r_i.$ In particular, by using the Lagrange
inversion formula \cite{Dinardo}, the $\bar{\alpha} \,$-free
cumulant umbra is such that $(- 1 \punt
\mathfrak{K}_{\scriptscriptstyle
\bar{\alpha}})_{\scriptscriptstyle D} \equiv
\bar{\alpha}_{\scriptscriptstyle D}^{<-1>},$ where
$\alpha_{\scriptscriptstyle D}$ is the derivative umbra of
$\alpha,$ i.e. such that $\alpha_{\scriptscriptstyle D}^n \simeq n
\, \alpha^{n-1}.$ It is quite obvious to observe the difference
between the previous equivalence and the ones characterizing both
the $\alpha$-classical and the $\alpha$-boolean cumulant umbrae.
This is why the computation of free cumulants is quite difficult
compared with the classical and boolean ones. Speicher has found a
way to expressing free cumulants $\{r_n\}_{n \geq 1}$ in terms of
moments $\{m_n\}_{n \geq 1}$ (and viceversa) by using non-crossing
partitions of a set \cite{Speicher1,Speicher2}. However, the
resulting algorithm is quite difficult to implement. Bryc in
\cite{Bryc} uses a different approach. In the next section, we
propose an unifying algorithm, which relies on the following
parametrizations.
\begin{theorem}[Parametrizations] Let $\mathfrak{K}_{\scriptscriptstyle
\bar{\alpha}}$ be the $ \bar{\alpha}$-free  cumulant umbra. For
$i=1,2,\ldots$ we have
\begin{equation}
\bar{\alpha}^i \simeq
\mathfrak{K}_{\scriptscriptstyle\bar{\alpha}} (\mathfrak
{K}_{\scriptscriptstyle\bar{\alpha}} + i \punt
\mathfrak{K}_{\scriptscriptstyle\bar{\alpha}})^{i-1} \quad
\mathfrak{K}_ {\scriptscriptstyle\bar{\alpha}}^i \simeq
\bar{\alpha} ( \bar{\alpha} - i \punt \bar{\alpha})^{i-1}.
\label{par3}
\end{equation}
\end{theorem}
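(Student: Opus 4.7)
The plan is to exploit the characterization $(-1 \punt \mathfrak{K}_{\scriptscriptstyle\bar{\alpha}})_D \equiv \bar{\alpha}_D^{<-1>}$ recalled just before the statement, which identifies the $\bar{\alpha}$-free cumulant umbra with an umbral compositional inverse. The right-hand sides in (\ref{par3}) are umbral Abel polynomials of the form $x(x \pm i \punt y)^{i-1}$, precisely the shape produced by Lagrange inversion applied to the functional equation $M(t) = R(tM(t))$. So the proof should be an umbral rendering of Lagrange inversion, parallel in spirit to the derivations of (\ref{par1}) and (\ref{par2}) for the classical and boolean cases, but with the extra twist that the shift depends on the index.

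First I would prove the second equivalence $\mathfrak{K}_{\scriptscriptstyle\bar{\alpha}}^i \simeq \bar{\alpha}(\bar{\alpha} - i \punt \bar{\alpha})^{i-1}$. Starting from $(-1 \punt \mathfrak{K}_{\scriptscriptstyle\bar{\alpha}})_D \equiv \bar{\alpha}_D^{<-1>}$ and invoking the umbral Lagrange inversion formula of \cite{Dinardo}---which extracts the moments of a compositional inverse as a $(1/n)$-weighted coefficient of a shifted power---one reads the right-hand side directly. The leading factor $\bar{\alpha}$ together with the drop in exponent from $i$ to $i-1$ absorb the $(1/i)$ prefactor of Lagrange combined with the derivative-umbra normalization $\alpha_D^n \simeq n\,\alpha^{n-1}$, while the shift $-i \punt \bar{\alpha}$ is the umbral avatar of dividing by $M(t)^i$ in the classical Lagrange formula, the minus sign being dictated by the $-1 \punt$ appearing in the characterization.

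With the second equivalence in hand, the first equivalence $\bar{\alpha}^i \simeq \mathfrak{K}_{\scriptscriptstyle\bar{\alpha}}(\mathfrak{K}_{\scriptscriptstyle\bar{\alpha}} + i \punt \mathfrak{K}_{\scriptscriptstyle\bar{\alpha}})^{i-1}$ should follow by inverting the roles of moments and free cumulants in $M(t) = R(tM(t))$, which runs Lagrange inversion in the opposite direction and turns the minus-shift into a plus-shift. Alternatively, once one direction is proved, the other can be deduced formally by verifying that the two umbral operators $X \mapsto X(X - i \punt X)^{i-1}$ and $X \mapsto X(X + i \punt X)^{i-1}$ are mutually inverse on umbral sequences---a transcription of the classical Abel polynomial identity into the umbral language provided by (\ref{(eq:13)})--(\ref{(eq:14)}).

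The main obstacle is the careful umbral bookkeeping in the Lagrange-inversion step, since, unlike the constant shifts $1$ and $2$ appearing in (\ref{par1}) and (\ref{par2}), the free case produces an index-dependent shift $i \punt \bar{\alpha}$. Correctly tracking the $i$-th power of $M(t)$ entering Lagrange's formula, the $i!$ rescaling built into the bar-umbra convention $\bar{\alpha}^i \simeq i!\,\alpha_i$ and $\mathfrak{K}_{\scriptscriptstyle\bar{\alpha}}^i \simeq i!\,r_i$, and the combinatorial weights $d_\lambda$ from (\ref{(eq:13)})--(\ref{(eq:14)}) is what makes this nontrivial; once these conventions are aligned, the identities in (\ref{par3}) should drop out as the umbral form of Lagrange inversion.
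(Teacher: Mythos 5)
The paper itself offers no proof of this theorem: Section~3 explicitly \emph{recalls} it from the reference \cite{free} (Di Nardo, Petrullo, Senato, \emph{Cumulants, convolutions and volume polynomials}), so there is no in-paper argument to compare against. Your starting point --- the characterization $(-1 \punt \mathfrak{K}_{\scriptscriptstyle\bar{\alpha}})_{\scriptscriptstyle D} \equiv \bar{\alpha}_{\scriptscriptstyle D}^{<-1>}$ combined with the umbral Lagrange inversion formula and Abel polynomials --- is indeed the route taken in that reference, so the strategy is the intended one.

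As a proof, however, what you have written is a plan rather than an argument, and the two places where it is vaguest are exactly where the work lies. First, the step ``invoking the umbral Lagrange inversion formula \ldots one reads the right-hand side directly'' is not a computation: you would need to state the umbral Lagrange formula precisely (how the moments of $\bar{\alpha}_{\scriptscriptstyle D}^{<-1>}$ are expressed through powers involving $-i \punt \bar{\alpha}$), unwind the derivative-umbra normalization $\alpha_{\scriptscriptstyle D}^{n} \simeq n\,\alpha^{n-1}$ on the left-hand side $(-1 \punt \mathfrak{K}_{\scriptscriptstyle\bar{\alpha}})_{\scriptscriptstyle D}$, and verify that the signs and the $1/i$ prefactor conspire to give exactly $\bar{\alpha}(\bar{\alpha} - i \punt \bar{\alpha})^{i-1}$; asserting that the pieces ``absorb'' one another is not a substitute for doing this. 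Second, and more seriously, the passage from the second equivalence to the first cannot be dismissed as ``running Lagrange in the opposite direction'' or as a sign flip: in the classical and boolean cases the two parametrizations are \emph{not} sign-reflections of one another (compare $-1 \punt \alpha$ with $+\beta \punt \kappa_{\scriptscriptstyle\alpha}$ in (\ref{par1}), and $-2 \punt \bar{\alpha}$ with $+2 \punt \bar{u} \punt \beta \punt \bar{\eta}_{\scriptscriptstyle\alpha}$ in (\ref{par2})), so the fact that in the free case the inverse map is literally $X \mapsto X(X + i \punt X)^{i-1}$ is a special inversion property of umbral Abel polynomials that must itself be proved, not assumed. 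Until these two steps are carried out explicitly, the proposal is a correct outline with genuine gaps rather than a proof.
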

The umbral polynomials $x(x - n \punt \alpha)^{n-1}$ are known as umbral Abel
polynomials.
\section{The umbral algorithm}
The algorithm we propose relies on an efficient expansion of the
following umbral polynomial $\gamma (\gamma + \delta \punt
\gamma)^{i-1}$ for $i = 1,2,\ldots$ and $\delta,\gamma$ umbrae.
\begin{proposition} \label{propfond}
If $\delta, \gamma \in A$ then
\begin{eqnarray}
\gamma (\gamma + \delta \punt \gamma)^{i-1} \simeq \sum_{\mu
\vdash i} (\delta)_ {\nu_{\mu}-1} \, d_{\mu} \, \gamma_{\mu}.
\label{alg}
\end{eqnarray}

\end{proposition}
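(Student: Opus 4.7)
The plan is to expand $\gamma(\gamma + \delta \punt \gamma)^{i-1}$ binomially, apply (\ref{(eq:13)}) to each resulting power of $\delta \punt \gamma$, and then reorganize the double sum by indexing on partitions $\mu \vdash i$. Concretely, the binomial theorem in the polynomial ring $R[A]$ (with the substitution $s = k+1$) gives
\[
\gamma(\gamma + \delta \punt \gamma)^{i-1} = \sum_{s=1}^{i} \binom{i-1}{s-1} \gamma^s (\delta \punt \gamma)^{i-s},
\]
and applying (\ref{(eq:13)}) with $\delta$ in place of $\gamma$ and $\gamma$ in place of $\alpha$ replaces each factor $(\delta \punt \gamma)^{i-s}$ by $\sum_{\lambda \vdash i-s} (\delta)_{\nu_\lambda} d_\lambda \gamma_\lambda$. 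Since the outer $\gamma^s$ is uncorrelated from the auxiliary umbra $\delta \punt \gamma$ as well as from the similar-but-distinct copies of $\gamma$ occurring in $\gamma_\lambda$, a direct moment computation shows $\gamma^s \gamma_\lambda \simeq \gamma_\mu$ whenever $\mu$ is obtained from $\lambda$ by adjoining a part of size $s$ (both sides evaluate under $E$ to $\prod_j a_j^{r_j(\mu)}$).

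Next I would reindex the double sum: each pair $(s,\lambda)$ with $s \geq 1$ and $\lambda \vdash i-s$ corresponds bijectively to a pair $(\mu,s)$ where $\mu \vdash i$ and $s$ is a distinct part-size of $\mu$ (i.e.\ $r_s(\mu) \geq 1$), with $\lambda = \mu \setminus \{s\}$. Writing $\nu_\lambda = \nu_\mu - 1$, this yields
\[
\gamma(\gamma + \delta \punt \gamma)^{i-1} \simeq \sum_{\mu \vdash i} (\delta)_{\nu_\mu - 1}\, \gamma_\mu \sum_{s:\, r_s(\mu) \geq 1} \binom{i-1}{s-1}\, d_{\mu \setminus \{s\}},
\]
so the proof reduces to showing that the inner sum equals $d_\mu$.

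The main obstacle is this final combinatorial identity, which I would verify by a direct calculation from (\ref{(eq:12)}): a short manipulation of factorials gives
\[
\binom{i-1}{s-1}\, d_{\mu \setminus \{s\}} \,=\, \frac{s\, r_s(\mu)}{i}\, d_\mu.
\]
Summing over the distinct part-sizes $s$ of $\mu$ and using the elementary identity $\sum_s s\, r_s(\mu) = i$ (which holds because $\mu \vdash i$) produces exactly $d_\mu$, completing the argument. The only delicate points are keeping the correspondence between pairs $(s,\lambda)$ and $\mu$ unambiguous, and carrying out the factorial manipulation above; both are essentially routine once the correct indexing is in place.
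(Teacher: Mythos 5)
Your proposal is correct and follows essentially the same route as the paper's proof: binomial expansion, the identification $\gamma^s\gamma_\lambda\simeq\gamma_\mu$ for $\mu=\lambda\cup\{s\}$, the factorial identity $\binom{i-1}{s-1}d_{\mu\setminus\{s\}}=\frac{s\,r_s(\mu)}{i}d_\mu$, and the final summation using $\sum_s s\,r_s(\mu)=i$. Your explicit treatment of the reindexing bijection between pairs $(s,\lambda)$ and pairs $(\mu,s)$ with $r_s(\mu)\ge 1$ is a slightly more careful writeup of exactly what the paper does.
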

\begin{proof}
By using the binomial expansion and equivalence (\ref{(eq:11)}),
we have
\begin{equation}
\gamma (\gamma + \delta \punt \gamma)^{i-1} \simeq \sum_{s=1}^i
\left( \begin {array}{c}
i - 1 \\
s - 1
\end{array} \right) \gamma^s \sum_{\lambda \vdash \, i - s} (\delta)_{\nu_
{\lambda}} d_{\lambda} \gamma_{\lambda}.
\end{equation}
Suppose we consider the partition $\mu$ of the integer $i,$
obtained by adding the integer $s$ to the partition $\lambda$.
Then we have $\gamma^s \gamma_{\lambda} \equiv \gamma_{\mu}$ and
$\nu_{\lambda} = \nu_{\mu} - 1$. If $c_s$ denotes the multiplicity
of $s$ in $\lambda$ and $m_s$ denotes the multiplicity of $s$ in
$\mu,$ then $m_s=c_s + 1.$ Therefore, we have
$$ \left( \begin{array}{c}
i - 1 \\
s - 1
\end{array} \right) d_{\lambda} = \frac{s}{i} \,\, \frac{i!}{(1!)^{c_1} \cdots
(s!)^{c_s+1} \cdots \,\, c_1! \cdots c_s! \cdots} = s \, m_s \,\,
\frac{d_{\mu}}{i},$$ where the last equality comes by multiplying
numerator and denominator for $m_s
> 0$. Recalling that $\sum s \, m_s = i,$
equivalence (\ref{alg}) follows.
\end{proof}
In order to evaluate $\gamma (\gamma + \delta \punt \gamma)^{i-1}$ via (\ref{alg}), we need
the factorial moments of $\delta$ and the moments of $\gamma.$ Recall that,
if we just have information on moments $\delta^i$, the factorial moments can be
recovered by using the well-known change of bases:
$$(\delta)_i  \simeq \sum_{k=1}^i s(i,k) \delta^k,$$
where $\{s(i,k)\}$ are the Stirling numbers of I kind. In particular
equivalence (\ref{alg}) allows us to give any expression of cumulants (classical,
boolean, free) in terms of moments and viceversa.   \\
{\it i)} For classical cumulants in terms of moments, due to the latter
of (\ref {par1}), we  set $\delta=-1 \punt u$ and $\gamma= \alpha$. Here
we find $E[(-1 \punt u)_i] = (-1)_i = (-1)^i i!.$  \\
{\it ii)} For moments in terms of classical cumulants, due to the first of
(\ref {par1}), we set $\delta= \beta$ and $\gamma=\kappa_{\scriptscriptstyle \alpha}.$
Here we know $E[(\beta)_i] = 1.$  \\
{\it iii)} For boolean cumulants in terms of moments, due to the latter of (\ref{par2}),
we set $ \delta=-2  \punt u$ and and $\gamma = \bar{\alpha}.$ Here we find
$E[(-2  \punt u)_i] =  (-1)^i (i+1)!.$  \\
{\it iv)} For moments in terms of boolean cumulants, due to the first of
(\ref{par2}), we set $\delta = 2 \punt \bar{u} \punt \beta$ and
$\gamma = \bar{\eta}_ {\scriptscriptstyle\alpha}.$  Here we have $E[(2 \punt \bar{u} \punt \beta)_i] =
E[(2 \punt  \bar{u} \punt \beta \punt \chi)^i] = E[(2 \punt  \bar{u})^i] =  E[(\bar{u} + \bar{u}^{\prime})^i] = (i+1)!.$ \\
{\it v)} For free cumulants in terms of moments, due to the latter of (\ref{par3}), we
set $\delta =-n \punt u$ and $\gamma = \bar{\alpha}.$ Here we have $E[(-n \punt u)_i]=(-n)_i.$ \\
{\it vi)} Finally, for moments in terms of free cumulants, due to the first of
(\ref{par3}), we set $ \delta =n \punt u$ and $\gamma =
\mathfrak{K}_{\scriptscriptstyle\bar{\alpha}}$. Here we have $E[(n \punt u)_i]=(n)_i.$ \\

The umbral algorithm in {\tt MAPLE} is the following:
\begin{verbatim}
y:=combinat['partition'](i):
umbralg := proc(i,fm,y)
            i! * add(fm[j] *
                mul(((g[x[1]]/x[1]!)^x[2])/x[2]!,
                   x=convert(y[j],multiset)),
            j=1..nops(y));
 end:
\end{verbatim}
In the {\tt MAPLE} procedure, the factorial moments
$E[(\delta)_{j-1}]$ are referred by the vector {\tt fm[j]} and the
moments $E[\gamma^k]$ are referred by the vector {\tt g[k]}.

 Table 1 refers to computational times (in seconds) reached by
using the umbral algorithm and the Bryc's procedure \cite{Bryc},
both implemented in {\tt MAPLE}, release 7, when we need free
cumulants in terms of moments \footnote{The output is in the same
form of the one given by Bryc's procedure.}.
\begin{table}[!htp]
\begin{tabular}{l c c} \hline
$i$ & {\tt MAPLE} (umbral) & {\tt MAPLE} (Bryc)\\
\hline
15 & 0.015 & 0.016\\
18 & 0.031 & 0.062\\
21 & 0.078 & 0.141\\
24 & 0.172 & 0.266\\
27 & 0.375 & 0.703\\
\hline
\end{tabular}
\caption{Comparisons of computational times needed to compute free
cumulants in terms of moments. Tasks performed on Intel (R)
Pentium (R),CPU 3.00 GHz, 512 MB RAM.}

\end{table}
\\
{\bf Moments of Wigner semicircle distribution.} In free
probability, the Wigner semicircle distribution is analogous to
the Gaussian r.v. in the classical probability. Indeed, free
cumulants of degree higher than 2 of the Wigner semicircle r.v.
are zero. The first column in Table 2 shows moments of the Wigner
semicircle r.v. $X$, computed by the umbral algorithm. They are
compared with Catalan numbers $C_i$ (second column), since it is
well-known that $E[X^{2i}]=C_i$ and $E[X^{2i+1}]=0$. By using
equivalence (\ref{(eq:14)}) it is straightforward to prove that
the umbra corresponding to the Wigner semicircle distribution is
$\bar{\varsigma} \punt \beta \punt \bar{\delta},$ where
$\varsigma$ is the umbra whose moments are the Catalan numbers and
$\delta$ is an umbra having only second moment equal to $1,$ the
others being zero. In the next section,
we will use again this umbra in describing the umbral syntax of a Gaussian r.v.\\
{\bf Moments of Marchenko-Pastur distribution.} In free
probability, the Marchenko-Pastur distribution is analogous to the
Poisson r.v. in the classical probability.  Indeed, the free
cumulants are all equal to a parameter $\lambda$. The last column
in Table 2 shows moments of the Marchenko-Pastur distribution
computed by the umbral algorithm.
\begin{table}[!htp]
\begin{tabular}{l c c l}
\hline
$i$ & Wigner & Catalan & Marchenko-Pastur r.v. \\
  &   r.v. & numbers &  \\
\hline
$2$ & $1$ & $2$ & $\lambda^{2}+\lambda$\\
$3$ & $0$ & $5$ & $\lambda^{3}+3\lambda^{2}+\lambda$\\
$4$ & $2$ & $14$ & $\lambda^{4}+6\lambda^{3}+6\lambda^{2}+\lambda$\\
$6$ & $5$ & $132$ & $\lambda^{6}+15\lambda^{5}+50\lambda^{4}+50\lambda^{3}+15\lambda^{2}
+\lambda$\\
$8$ & $14$ & $1430$ & $\lambda^{8}+28\lambda^{7}+196\lambda^{6}+490\lambda^{5}+490
\lambda^{4}+196\lambda^{3}+28\lambda^{2}+\lambda$\\
\hline
\end{tabular}
\caption{Moments of some special free distributions.}
\end{table}
\section{Computing cumulants of some known laws through the umbral algorithm}
An umbra looks like the framework of a r.v. with no reference to
any probability space, just looking at moments. The way to
recognize the umbra corresponding to a r.v. is to characterize the
sequence of moments $\{a_n\}$. When the sequence exists, this can
be done by comparing the moment generating function (m.f.g.) of a
r.v. with the so-called generating function (g.f.) of an umbra.
The g.f. of an umbra has been defined \cite{Dinardoeurop} as the
following formal power series
$$f(\alpha,t) = 1 + \sum_{n \geq 1} a_n \frac{t^n}{n!}.$$
In the classical umbral calculus, the convergence of the formal
power series $f(\alpha,t)$ is not relevant. This means that we can
define the umbra whose moments are the same as the moments of a
lognormal r.v., even if this r.v. does not admit a m.g.f.  So the
umbral algorithm allows us to compute classical \footnote{Recall
that, due to their properties of additivity and invariance under
translation, the cumulants are not necessarily connected with the
moments of probability distributions. We can define cumulants of
any r.v. disregarding the question whether its m.g.f. converges
\cite{Feller}.}, boolean or free cumulants of any r.v. having
sequence of moments $\{m_n\}$. Once the umbra corresponding to the
r.v. has been characterized, we choose $\delta=-1 \punt u, -2
\punt u, -n \punt u$ in equivalence (\ref{alg}), depending on
whether we need classical, boolean or free cumulants.

In the following, we take up again some of the examples given in \cite{Bryc},
showing how they can be recovered through the umbral algorithm by a suitable characterization
of the involved umbrae. \\
{\bf Poisson r.v.} A Poisson r.v. of parameter $\lambda$ is
umbrally represented by the umbra $\lambda \punt \beta$, because
$f(\lambda \punt \beta, t) = \exp[\lambda(\exp(t)-1)]$, which is
the m.g.f. of a Poisson r.v. The moments are \cite{Dinardo}
$a_{i}=E[(\lambda.\beta)^{i}]= \sum _{k=1}^{i}S(i,k)\lambda^{k},$
where $S(i,k)$ are the Stirling numbers of second type. So
cumulants of a Poisson r.v. can be computed via the umbral
algorithm, taking as input the sequence of moments
$E[(\lambda.\beta)^{i}]$. If the input is the sequence of
factorial moments $\{(-1)^i i!\},$ we get classical cumulants; if
the input is the sequence of factorial moments $\{(-1)^i
(i+1)!\},$ we get boolean cumulants;  if the input is the sequence
of factorial moments $\{(-n)_i\}$ we get free cumulants (Tables 3 and 4 in \cite{Bryc}). \\
{\bf Compound Poisson r.v.} A compound Poisson r.v. $S_N = X_1 +
\cdots + X_N,$ where $\{X_i\}$ are i.i.d. r.v.'s and $N$ is a
Poisson r.v. of parameter $\lambda,$ is umbrally represented by
the umbra $\lambda \punt \beta \punt \alpha$, because $f(\lambda
\punt \beta \punt \alpha, t) = \exp\{\lambda[f(\alpha,t)-1]\}$,
which is the m.g.f. of a compound Poisson r.v. The formal power
series $f(\alpha,t)$ corresponds to the m.g.f. of $X_i$. Due to
equivalence (\ref{(eq:14)}),  the moments are $a_{i}=E[(\lambda
\punt \beta \punt \alpha)^{i}]= \sum _{\lambda \vdash i} \lambda^k
d_{\lambda} \alpha_{\lambda}.$   \\
{\bf Exponential r.v.} An exponential r.v. is umbrally represented
by the umbra $\bar{u} \over \lambda$ because $f(\bar{u} / \lambda,
t) = (1- {t \over \lambda})^{-1},$ which is the m.g.f. of an
exponential r.v. with parameter $\lambda> 0.$ So its moments are
$a_i = E[({\bar{u} \over \lambda})^n] = {n! \over \lambda}.$ In
order to obtain the second column of Table 3 in \cite{Bryc},
choose in (\ref{alg}) $\lambda=1$ and $\{(-n)_i\}$ as
factorial moments. \\
{\bf Uniform r.v.} An uniform r.v. on the interval $[a,b]$ has
m.g.f.
$$M(t)=\frac{e^{t \,b} - e^{t \,a}}{t(b-a)} = e^{t \,a} \left[
\frac{e^s-1}{s} \right]_{s=t(b-a)}.$$ The umbra, with g.f.
${e^s-1} \over s$, is the inverse of the Bernoulli umbra
\cite{SIAM}, i.e. $-1 \punt \iota.$ So an uniform r.v. on the
interval $[a,b]$ is umbrally represented by the umbra $a+(b-a)(-1
\punt \iota)$. Recalling that $E[(-1 \punt \iota)^i] = {1 \over
{i+1}}$, we get
$$a_{i}=E\bigl\{[a+(b-a)(-1.\iota)]^{i}\bigr\}=\sum_{j=0}^{i}{i\choose
j}a^{i-j}\frac{(b-a)^{j}}{j+1}.$$ In order to obtain the last
column of Table 3 in \cite{Bryc}, choose in (\ref{alg}) $a=-1,
b=1$ and $\{(-n)_i\}$ as factorial moments. \\
{\bf Bernoulli r.v.} A Bernoulli r.v. of parameter $p \in (0,1)$
has m.g.f. $M(t)= q + p e^t = 1 + p(e^t - 1).$ The umbra with g.f.
$M(t)$ is $\chi \punt p \punt \beta$ (see \cite{Dinardoeurop} for
more details), whose moments are $a_i = p$. \\
{\bf Binomial r.v.} A binomial r.v. of parameters $n,$ a positive
integer, and $p \in (0,1)$ is a sum of $n$ i.i.d. Bernoulli
r.v.'s. Similarly a binomial r.v. is umbrally represented by $n
\punt (\chi \punt p \punt \beta)$. Due to equivalence
(\ref{(eq:11)}), we have $a_i = E\{[n.(\chi \punt p \punt
\beta)]^i\} = \sum_{\lambda\vdash i}(n)_{\nu_{\lambda}}
d_{\lambda} p^{\nu_{\lambda}}$. In order to obtain the second
column of Table 4 in \cite{Bryc}, choose $\{(-n)_i\}$ as
factorial moments  in (\ref{alg}). \\
{\bf Gaussian r.v.} A gaussian r.v. with real parameter $\mu$ and
$\sigma > 0$  has m.g.f. $M(t)=\exp \left(\mu t + \sigma^2 {t^2
\over 2} \right).$ This power series is the g.f. of the umbra $\mu
+ \beta \punt (\sigma \delta)$, where $\delta$ is an umbra such
that $E[\delta^2]=1$ whereas $E[\delta  ^i]=0$ for positive
integers $i \ne 2.$ The following proposition gives the expression
of the $n$-th moment of the umbra representing the Gaussian r.v.
\begin{proposition}
For n=1,2,\ldots we have
\begin{equation}
a_n= E\big[(\mu+\beta.(\sigma\delta))^n\big]=
\sum_{k=0}^{\lfloor {n / 2} \rfloor}\bigg({\sigma^2 \over 2}\bigg)^k \frac{(n)_{2k}}{k!}\mu^{n-2k}.
\label{(mgau)}
\end{equation}
\end{proposition}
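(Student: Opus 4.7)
The plan is to start from a binomial expansion and reduce the computation to evaluating the moments of $\beta\punt(\sigma\delta)$, which will in turn be handled by the partition-sum formula (\ref{(eq:14)}). Thanks to the extreme sparsity of the moment sequence of $\delta$ (only the second moment is nonzero), only very few partitions contribute, and the final sum collapses exactly to the stated expression.

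First, since the umbra $\mu$ (a scalar) is uncorrelated from $\beta\punt(\sigma\delta)$, the binomial theorem applied umbrally yields
\begin{equation*}
(\mu+\beta\punt(\sigma\delta))^n\simeq\sum_{j=0}^{n}\binom{n}{j}\mu^{n-j}(\beta\punt(\sigma\delta))^{j}.
\end{equation*}
Hence it suffices to compute $b_j:=E[(\beta\punt(\sigma\delta))^{j}]$. Specializing equivalence (\ref{(eq:13)}) with $\gamma=\beta$ and $\alpha=\sigma\delta$, and recalling that $E[(\beta)_{\nu_{\lambda}}]=1$ for every $\lambda$, we get
\begin{equation*}
b_j=\sum_{\lambda\vdash j}d_{\lambda}\,E[(\sigma\delta)_{\lambda}]=\sum_{\lambda\vdash j}d_{\lambda}\prod_{s\geq 1}(\sigma^{s}E[\delta^{s}])^{r_{s}},
\end{equation*}
where $\lambda=(1^{r_1},2^{r_2},\ldots)$.

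Now I would use the defining property of $\delta$: $E[\delta^{s}]=0$ unless $s=2$, in which case $E[\delta^{2}]=1$. This forces $r_{s}=0$ for every $s\neq 2$, so the only surviving partition is $\lambda=(2^{k})$, which requires $j=2k$ to be even. For this partition $\nu_{\lambda}=k$ and, from (\ref{(eq:12)}),
\begin{equation*}
d_{(2^{k})}=\frac{(2k)!}{k!\,(2!)^{k}}=\frac{(2k)!}{k!\,2^{k}},\qquad E[(\sigma\delta)_{(2^{k})}]=\sigma^{2k}.
\end{equation*}
Therefore $b_{2k}=\sigma^{2k}(2k)!/(k!\,2^{k})$ and $b_{j}=0$ for odd $j$.

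Finally, I would plug this back into the binomial expansion, keeping only the even indices $j=2k$ with $0\leq k\leq\lfloor n/2\rfloor$:
\begin{equation*}
a_{n}=\sum_{k=0}^{\lfloor n/2\rfloor}\binom{n}{2k}\mu^{n-2k}\,\frac{(2k)!}{k!\,2^{k}}\,\sigma^{2k}=\sum_{k=0}^{\lfloor n/2\rfloor}\left(\frac{\sigma^{2}}{2}\right)^{k}\frac{(n)_{2k}}{k!}\,\mu^{n-2k},
\end{equation*}
since $\binom{n}{2k}(2k)!=(n)_{2k}$. This is precisely (\ref{(mgau)}). There is no real obstacle here; the only tricky bookkeeping step is spotting that the vanishing pattern of the moments of $\delta$ selects a unique partition shape, after which $d_{(2^{k})}$ combines with the binomial coefficient into the falling factorial $(n)_{2k}$ cleanly.
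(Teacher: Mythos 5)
Your proof is correct and follows essentially the same route as the paper's: binomial expansion in $\mu$, reduction of $E[(\beta\punt(\sigma\delta))^{j}]$ to a partition sum, observation that the vanishing moments of $\delta$ leave only the partition $(2^{k})$ (forcing $j$ even), and the final recombination $\binom{n}{2k}(2k)!=(n)_{2k}$. No substantive differences.
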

\begin{proof}
By using the binomial expansion, we have
\begin{equation}
\big(\mu+\beta.(\sigma\delta)\big)^{n}\simeq
\sum_{j=0}^{n}{n\choose
j}\mu^{n-j}\big[\beta.(\sigma\delta)\big]^{j} \simeq
\sum_{j=0}^{n}{n\choose
j}\mu^{n-j} \sum_{\lambda\vdash
j}d_{\lambda} (\sigma\delta)_{\lambda}.
\label{(be)}
\end{equation}
Suppose $\lambda=(1^{r_1}, 2^{r_2}, \ldots),$ then
$(\sigma\delta)_{\lambda} \equiv (\sigma\delta^{\prime}
)^{.r_{1}}[\sigma^2 (\delta^{\prime \prime})^2]^{.r_{2}} \cdots
\equiv \sigma^{j} \delta_{\lambda}$. On the other hand, due to the
definition of the umbra $\delta$, we have $E[\delta_{\lambda}]\neq
0$ iff the partition $\lambda$ is of type $(2^{r_2})$. Thus, if
$j$ is odd, there does not exist any partition of this type and so
$\big[\beta.(\sigma \delta)\big]^{j}\simeq 0$. Instead, if $j$ is
even, say $j=2k$, there exists a unique partition of type
$(2^{r_2})$ which corresponds to $r_2 = k$. For this partition, we
have $d_{\lambda} = {{(2k)!} \over {(2!)^{k} \, k!}}$ so that
$\big[\beta.( \sigma \delta)\big]^{2k} \simeq {{(2k)! \,
\sigma^{2k}} \over {(2!)^{k} \, k!}}$. Replacing this last
equivalence in (\ref{(be)}), we have
$$\big(\mu+\beta.(\sigma\delta)\big)^{n} \simeq
\sum_{k=0}^{\lfloor {n / 2} \rfloor}\frac{n!}{(2k!)(n-2k)!} \mu^{n-2k}\frac{(2k!)\sigma^{2k}}
{(2!)^{k}k!}$$
by which the result follows.
\end{proof}
We have proved by umbral tools that moments
of a Gaussian r.v. can be expressed by using a particular
sequence of orthogonal polynomials, the \emph{Hermite polynomials}:
\begin{equation*}
H_{n}^{(\nu)}(x)\simeq
\sum_{k=0}^{\lfloor {n / 2} \rfloor}\bigg(\frac{-\nu}{2}\bigg)^k \frac{(n)_{2k}}{k!}x^{n-2k},
\end{equation*}
with $x=\mu$ and $\nu=-\sigma^{2}$. All the properties of Hermite
polynomials can be easily recovered by using the Gaussian umbra.
We skip the details. As before, the first column of Table 3 in
\cite{Bryc} can be recovered from the umbral algorithm by using
(\ref{(mgau)}) with $\mu=0$ and $\sigma=1$ and choosing
$\{(-n)_i\}$  as factorial moments in (\ref{alg}).
\section{Acknowledgments}
The authors thank Giuseppe Guarino for his contribution in the implementation
of the umbral algorithm.

\end{document}